\documentclass[letterpaper, 10 pt, conference]{ieeeconf}
\IEEEoverridecommandlockouts                              % This command is only needed if
\usepackage{times}
\usepackage{multicol}
\usepackage[bookmarks=true]{hyperref}
\usepackage{cite}
\usepackage{amsmath,amssymb,amsfonts}
\usepackage{graphicx}
\usepackage{tikz}
\usetikzlibrary{arrows.meta,positioning,matrix}
\usepackage{pgfplots}
\pgfplotsset{compat=1.18}
\usepackage{paralist}

\usepackage{graphicx} % Required for inserting images
\usepackage[export]{adjustbox}
\usepackage{float}
\usepackage{xcolor}
\usepackage{url}

\newtheorem{definition}{Definition}

\newtheorem{proposition}{Proposition}

\newtheorem{remark}{Remark}
\newtheorem{example}{Example}
\newtheorem{assumption}{Assumption}

\newcommand{\reals}{\mathbb{R}}
\newcommand{\nnreals}{\mathbb{R}_{\geq 0}}
\newcommand{\naturals}{\mathbb{N}}

\newcommand{\rulebook}{\mathcal{R}}
\newcommand{\ruleset}{R}
\newcommand{\rbrule}{r}
\newcommand{\realizations}{\Xi}
\newcommand{\preorder}{\lesssim}

\newcommand{\outcomes}{\Omega}
\newcommand{\algebra}{\Sigma}
\newcommand{\probability}{\mathit{Pr}}
\newcommand{\expectation}{\mathbb{E}}
\newcommand{\VaR}{\text{VaR}}
\newcommand{\CVaR}{\text{CVaR}}
\newcommand{\outcome}{\omega}
\newcommand{\risk}{\rho}
\newcommand{\mfunctions}{\mathcal{F}}
\newcommand{\interaction}{\mathbf{E}}

\newcommand{\threshold}{\gamma}
\newcommand{\rrisk}{\risk_{\rbrule}}

\newcommand{\rriski}[1]{\risk_{\rbrule_{#1}}}
\newcommand{\rthreshold}{\threshold_{\rbrule}}

\newcommand{\rthresholdi}[1]{\threshold_{\rbrule_{#1}}}

\newcommand{\rrrule}{\rbrule_{\text{risk}}}
\newcommand{\rarulebook}{\rulebook_{\text{risk}}}
\newcommand{\raruleset}{R_{\text{risk}}}

\newcommand{\rtrajectories}{\mathcal{T}}
\newcommand{\etrajectories}{\mathcal{E}}
\newcommand{\rtraj}{\tau}
\newcommand{\etraj}{\xi}

\newcommand{\trajectories}{\mathcal{T}}

\newcommand{\rbleq}{\lesssim_{\rulebook}}
\newcommand{\raleq}{\lesssim_{\rarulebook}}
\newcommand{\rblt}{<_{\rulebook}}
\newcommand{\ralt}{<_{\rarulebook}}

\title{\LARGE \bf
  Risk-Aware Rulebooks for Multi-Objective Trajectory Evaluation under Uncertainty
}

\author{Tichakorn Wongpiromsarn$^{1}$% <-this % stops a space
  \thanks{*This work was supported in part by NSF under Grant
    CNS-2141153.}% <-this % stops a space
  \thanks{$^{1}$Tichakorn Wongpiromsarn is with the Department of Computer Science,
    Iowa State University, Ames, IA 50010
    {\tt\small nok@iastate.edu}}%
}

\begin{document}

\maketitle
\thispagestyle{empty}
\pagestyle{empty}

\begin{abstract}
  We present a risk-aware formalism for evaluating system trajectories in the presence of
  uncertain interactions between the system and its environment.
  The proposed formalism supports reasoning under uncertainty and systematically handles complex
  relationships among requirements and objectives, including hierarchical priorities and
  non-comparability.
  % Specifically, we extend the rulebook formalism to account for uncertainty arising from the
  % interaction between the controlled system and its environment, enabling the assessment and
  % comparison of alternative trajectories prior to execution.
  Rather than treating the environment as exogenous noise,
  we explicitly model how each system trajectory influences the environment and evaluate
  trajectories under the resulting distribution of environment responses.
  We prove that the formalism induces a preorder on the set of system trajectories, ensuring
  consistency and preventing cyclic preferences.
  Finally, we illustrate the approach with an autonomous driving example
  that demonstrates how the formalism enhances explainability by clarifying the rationale behind
  trajectory selection.

  % In this context, safety requirements can be treated as objectives,
  % where high-priority requirements, such as those that affect human safety,
  % take precedence, while others may lack a clear hierarchy or comparability.
  % Compounding this challenge,
  % decision-making in these systems must account for uncertainties stemming from
  % imperfect perception and unstructured and unpredictable environments.
  % To address these challenges, this paper focuses on providing a formal specification language
  % capable of handling conflicting objectives that are only partially comparable
  % and enable reasoning under uncertainty.
  % These rules often differ in importance.
  % For example, avoiding collisions is universally regarded as more critical than staying within lane
  % boundaries.
  % However, the relative priority between certain rules, such as staying within lane boundaries and
  % giving sufficient clearance to other vehicles can be less straightforward and often left to the
  % discretion of system developers.
\end{abstract}

\IEEEpeerreviewmaketitle

\section{Introduction}
Safety-critical autonomous systems, such as autonomous vehicles, must satisfy multiple safety
requirements, including avoiding collisions, adhering to local driving norms,
and obeying traffic laws.
At the same time, these systems need to optimize various objectives, such as
minimizing travel time and maximizing passenger comfort.
However, these requirements and objectives often conflict,
and in some cases, it may not even be possible to satisfy all safety requirements simultaneously
\cite{Topcu:2020:Assured}.
For example, on a highway with a 65 mph speed limit, where most humans drive at 80 mph,
an autonomous vehicle faces a dilemma between adhering to the speed limit or adapting to the flow of
traffic, with both choices potentially increasing the risk of accidents.
This complexity necessitates a framework that supports formally justifiable trade-offs,
where certain requirements may be violated if necessary.
While requirements such as those that affect human safety are more important than others,
other requirements and objectives may not be comparable.
Moreover, decision-making in these systems needs to take into account uncertainties stemming from
imperfect perception and unstructured and unpredictable environments.

Variants of temporal logics have been employed as formal specification languages to
precisely define the requirements of safety-critical systems
\cite{Baier:2008:Principles,Dokhanchi:2018:Evaluating,Arechiga:2019:Specifying}.
Recent extensions, such as Metric Temporal Logic \cite{Fainekos:2009:Robustness},
robust Linear-time Temporal Logic \cite{Anevlavis:2021:Being}, and Signal Temporal Logic
(STL) \cite{Maler:2004:Monitoring,Donze:2010:Robust,Raman:2014:Model,Raman:2015:Reactive} include
robust semantics and allow quantitative assessment of the robustness of satisfaction.
Various robustness metrics have been proposed
\cite{Akazaki:2015:Time,Lindemann:2017:Robust,Haghighi:2019:Control,Mehdipour:2019:Arithmetic}.
Probabilistic extensions, including Probabilistic Signal Temporal Logic
\cite{Sadigh:2016:Safe,Tiger:2020:Incremental} and Chance Constrained Temporal Logic \cite{Jha:2018:Safe},
allow reasoning about correctness in the presence of uncertainty.
More recent formulations have further incorporated risk measures into temporal-logic-based
planning and control
\cite{Lindemann:2021:STL,Lindemann:2022:Reactive,Akella:2025:Risk,Lindemann:2023:Risk,Safaoui:2022:Risk},
enabling robustness-risk quantification, reactive synthesis, and risk-bounded control of stochastic
systems.
% \citet{Lindemann:2021:STL} introduces the notion of robustness risk for STL that quantifies the
% risk of not satisfying an STL specification robustly and allows for general risk measures and
% \cite{Lindemann:2022:Reactive} generalizes STL to include both uncontrollable propositions and
% risk predicates to enable both reactive and risk-aware planning.
These existing approaches, however, combine all the requirements into a single formula, failing to
capture unequal importance and complex relationships among individual requirements.

To address unequal importance among requirements, prior works have imposed
a total preorder over specifications, each expressed in linear temporal logic
\cite{Tumova:2013:ACC,Tumova:2013:LCS,Castro:2013:CDC,Tumova:2014:Maximally,Tumova:2016:Least,Vasile:2017:Minimum,Wongpiromsarn:2021:Minimum}.
The violation of each requirement is quantified by the amount of time the system spends violating
it.
The rulebook formalism \cite{Censi:2019:Liability,Wongpiromsarn:2026:Formal} provides a unified representation of
requirements and objectives by abstracting both as \emph{rules}.
Each rule is associated with a cost function that quantifies its degree of unsatisfaction
(e.g., the severity of violation for safety requirements or deviation from the intended goal for
performance objectives).
This formalism extends prior approaches in two key ways.
First, it allows each rule to be equipped with its own independently defined violation metric
(e.g., kinetic energy transfer for collision avoidance),
instead of using the same violation metric for all specifications, as is typical in
temporal-logic-based approaches.
Second, instead of imposing a total preorder over rules, it permits a general preorder,
which allows both hierarchical priorities and non-comparable relationships.
A preorder relation encodes the relative importance of the rules
and determines how trade-offs are resolved.
As a result, the rulebook formalism generalizes existing temporal logic-based and optimization-based
planning and control, providing a unifying framework across robotics, formal methods, control
theory, and operations research \cite{Wongpiromsarn:2026:Formal}.
% a systematic method for evaluating system behavior in
% the presence of complex and potentially non-comparable objectives.
However, the current rulebook formalism is limited to retrospective evaluation,
where the system's behavior can only be assessed after execution, once the
environmental outcomes (e.g., the movements of other vehicles or pedestrians), are
fully realized.

% each safety requirement or objective is represented by a cost function that quantifies its
% violation and serves as a criterion to be minimized.
% A rulebook consists of a set of rules, each of which represents a distinct objective and is
% associated with a cost function. This cost function quantifies the extent of the rule's
% unsatisfaction: for safety requirements aimed at minimizing violations, the cost function measures
% the severity or degree of violation, whereas for performance objectives, it evaluates how far the
% system deviates from achieving its intended goal.
% that quantifies its violation.
% Additionally, a preorder relation is used to capture the relative importance of these rules,
% enabling the specification of both hierarchical and non-comparable relationships among objectives.
% As a result, this framework offers a systematic approach to evaluate a robot's behavior in the
% presence of complex relationships among the objectives.
% However, the current rulebook formalism is limited to retrospective evaluation,
% where the robot's behavior can only be assessed after its actions have been executed
% and the actual environmental outcomes, such as the movements of other vehicles or pedestrians, are fully realized.

The main contribution of this paper is the extension of
the rulebook formalism to enable the evaluation and comparison of candidate
system trajectories under environmental uncertainty during the planning stage.
This provides a systematic approach for comparing alternative system trajectories,
whether generated by rule-based or machine-learning-based methods, before execution.
Specifically, our contributions are as follows.
\vspace{-1mm}
\begin{itemize}
\item We develop a risk-aware rulebook formalism that explicitly captures system-environment
  interaction under uncertainty by allowing system trajectories to influence the distribution of environmental
  behaviors.
  Each pair of rule and system trajectory then induces a random variable
  representing the resulting rule violation, whose risk is quantified using a chosen risk measure.
  This enables a unified treatment of expectation-based, worst-case, and tail-risk criteria to be
  handled within a single preordered rule structure and
  provides a systematic approach for comparing alternative system trajectories,
  whether generated by rule-based or machine-learning-based methods, before execution.
\item Our construction ensures that the resulting risk-aware rulebook is a
  rulebook on the set of system trajectories.
  As a result, it inherits the key properties of classical rulebooks
  \cite{Censi:2019:Liability,Wongpiromsarn:2026:Formal}.
  In particular, it induces a preorder over system trajectories. This prevents cyclic preferences
  (e.g., trajectory A is better than B, B is better than C, but C is better than A) and
  ensures a well-defined notion of ``optimal'' trajectory.
  Moreover, existing rulebook-based control synthesis algorithms
  \cite{Wongpiromsarn:2026:Formal} can be applied without modification.
  Finally, this makes it natural for a system designer to start
  from a classical rulebook (specified, e.g., by a regulator and used for retrospective evaluation with
  fully realized environmental outcomes)
  and augment each rule with an appropriate risk measure to handle uncertainty during planning.
\item We establish a connection between safety and optimality under this formalism.
  In addition, we show that optimal trajectories satisfy principled tradeoff properties:
  if an alternative trajectory strictly improves upon an optimal trajectory under one rule,
  then it must incur a compensating deterioration under another rule that is not lower in priority.
  In this sense, optimal trajectories are rational as their selection is supported by explicit and
  structurally consistent tradeoffs across competing objectives under uncertainty.
  By making these tradeoffs explicit, our formalism provides a basis for explainability: preferences
  between trajectories can be traced to well-defined rule-level comparisons governed by the rule
  preorder, clarifying how uncertainty and partially prioritized objectives jointly determine the
  selected trajectory.
\end{itemize}

% This enables the system to make informed, risk-aware decisions while ensuring that the evaluation
% process aligns with partially prioritized objectives even in uncertain environments.
% Additionally, our formalism accounts for the interaction between the system and its environment,
% modeling how a system trajectory can influence the behavior of the surrounding entities.
% We provide a complete proof that the proposed risk-aware rulebook formalism induces a preorder on
% the set of system trajectories.
% This ensures consistency by preventing cyclic preferences (e.g., trajectory A is better than B, B is
% better than C, but C is better than A); thus, ensuring that the notion of ``optimal'' trajectory is
% well-defined.
% We also establish a connection between ``safe'' and ``optimal'' trajectories under this
% formalism.
% As a result, our formalism provides explainability by making the rationale behind the system's
% decisions transparent, clarifying how conflicting objectives and uncertainties are managed to
% identify an optimal trajectory.

\section{Preliminaries}
\label{sec:prelim}
Throughout the paper, we let $\reals$, $\nnreals$, and $\naturals$ denote the set of real,
non-negative real, and natural numbers, respectively.
The expectation of a random variable $f$ is denoted by $\expectation[f]$.

\subsection{Rulebooks}
This section provides an overview of a formalism called ``rulebooks'' introduced in
\cite{Censi:2019:Liability}.
At the core of this framework is the concept of ordering rules based on their relative priority,
which defines preference relations over a set of possible
\emph{realizations}, denoted by $\realizations$.
% Each rule represents a violation metric on the possible outcomes.
We now provide precise definitions of the foundational elements of this formalism.

\begin{definition}
  \label{def:rule}
  A  \emph{rule} is a function $\rbrule : \realizations \to \mathbb{R}_{\geq 0}$ that measures the
  degree of violation of its argument.
\end{definition}

If $\rbrule(x) < \rbrule(y)$, then the realization $y$ violates the rule $\rbrule$ to a greater
extent than does $x$.
In particular, $\rbrule(x) = 0$ indicates that a realization $x$ is fully compliant with the rule.
Note that the definition of the violation metric might be analytical, ``from first principles'', or
be the result of a learning process.
Additionally, while we refer to this function as a ``rule'', it is not limited to a regulatory
constraint but can also represent performance-related criteria
like comfort, efficiency, or user preferences that are desirable
but not strictly required.

The rulebook formalism uses the notion of a preorder \cite{Schechter:1997:Handbook,Eklund:1990:Generalized}
to rigorously define how rules can be ordered.

\begin{definition}
  \label{def:preorder}
  A \emph{preorder} on a set $S$ is a binary relation $\preorder$ that is
  reflexive ($s \preorder s$ for all $s \in S$), and
  transitive ($s_{1} \preorder s_{2}$ and $s_{2} \preorder s_{3}$ imply $s_{1} \preorder s_{3}$ for
  all $s_{1}, s_{2}, s_{3} \in S$).
  Any preorder gives rise to a strict partial order $<$ defined by $s_{1} < s_{2}$ whenever
  $s_{1} \preorder s_{2}$ but $s_{2} \not \preorder s_{1}$.
\end{definition}

Note that a preorder is not necessarily antisymmetric.
In other words, it is possible for both $s_{1} \preorder s_{2}$ and $s_{2} \preorder s_{1}$ to hold
for some $s_{1} \not= s_{2}$.
As a result, preorder can be seen as a generalization of partial order.
% Specifically, every partial order is a preorder but a preorder that is not antisymmetric is not a partial order.

% Given a preorder $\preorder$, we can define an equivalence relation $\sim$ on $S$ such that
% $s_{1} \sim s_{2}$ if and only if $s_{1} \preorder s_{2}$ and $s_{2} \preorder s_{1}$.
% It can be shown that the resulting relation is reflexive, transitive, and symmetric.
% Using this equivalence relation, a preorder on $S$ can be viewed as a partial order on
% the set $\equivset{S}$ of all equivalence classes of $\sim$.
% the quotient set of the equivalence $\equivset{S}$, which is the set of all equivalence classes of $\sim$.
% a preorder on $S$ can be viewed as an equivalence relation on $S$,
% together with a partial order on the set of equivalence class.

\begin{definition}
  \label{def:rulebook}
  A rulebook is a tuple $\rulebook = \langle \ruleset, \preorder \rangle$,
  where $\ruleset$ is the set of rules and $\preorder$ is a preorder on $\ruleset$,
  which specifies the relative importance among the rules.
\end{definition}

Throughout the paper, we assume that $\ruleset$ is finite.
Since the ordering of the rules forms a preorder, comparing rules entails three possibilities:
\begin{inparaenum}
\item Strict priority: One rule is strictly more important than the other,
  denoted as $\rbrule_{1} > \rbrule_{2}$, indicating that $\rbrule_{1}$ has priority over
  $\rbrule_{2}$, e.g., collision avoidance is prioritized over the correct use of turn signals;
\item Incomparability:  The rules are incomparable, meaning that neither
  $\rbrule_{1} \preorder \rbrule_{2}$ nor $\rbrule_{2} \preorder \rbrule_{1}$ holds,
  e.g., safety of property vs. safety of animals;
\item Equal rank: The rules are of the same importance,
  denoted as $\rbrule_{1} \sim \rbrule_{2}$,
  e.g., driving within a drivable area vs. respecting lane direction.
\end{inparaenum}
This preorder structure is more expressive than standard weighted or lexicographic formulations.
Specifically, weighted approaches correspond to rules of equal importance (case (3)), while
lexicographic approaches correspond to a total priority ordering (case (1)).
As shown in \cite{Wongpiromsarn:2026:Formal}, the classical rulebook formalism
generalizes both temporal-logic-based planning and optimization-based control in the deterministic
setting.
% We refer the reader to \cite{Wongpiromsarn:2026:Formal} for more details.

It can be proved \cite{Censi:2019:Liability} that a rulebook $\rulebook$ induces a preorder $\rbleq$ on $\realizations$.
Given realizations $x$ and $y$, $x \rbleq y$ can be
interpreted as
$x$ being ``at least as good as'' $y$, i.e., the degree of violation of the rules by $x$ is at most
as much as that of $y$.
Formally, we say that $x \rbleq y$ if for any rule $\rbrule \in \ruleset$
satisfying $\rbrule(x) > \rbrule(y)$, there exists a higher priority rule $\rbrule' > \rbrule$ such that $\rbrule'(x) < \rbrule'(y)$.
We say that $x$ is ``strictly better than'' $y$, denoted by $x \rblt y$,
if $x \rbleq y$ but $y \not\rbleq x$.
An equivalence relation $\sim_{\rulebook}$ can be defined such that $x \sim_{\rulebook} y$
if and only if $x \rbleq y$ and $y \rbleq x$.
It can be proved \cite{Censi:2019:Liability} that $x \sim_{\rulebook} y$ if and only if $\rbrule(x) =
\rbrule(y)$ for all rules $r \in \ruleset$.
% Finally, we say that $x$ and $y$ are \emph{comparable} if at least one of the relations
% $x \rbleq y$ or $y \rbleq x$ holds.

\vspace{-1mm}
\subsection{Probability and Risk Measures}
\vspace{-1mm}
All measurable spaces are assumed to be standard
Borel~\cite{kechrisClassicalDescriptiveSet1995}, i.e., all \(\sigma\)-algebras are the Borel
\(\sigma\)-algebras generated by open sets of a completely metrizable separable topological space.
Let $S_{1}$ and $S_{2}$ be the underlying sets of measurable spaces.
% Let $S_{1}$ and $S_{2}$ be sets equipped with $\sigma$-algebras.
% Given two sets $S_{1}$ and $S_{2}$ that are equipped with respective $\sigma$-algebras $\algebra_{1}$ and $\algebra_{2}$,
We define $\mfunctions(S_{1}, S_{2})$ as the set of all
measurable functions from $S_{1}$ to $S_{2}$.
% An element $f \in \mfunctions(S_{1}, S_{2})$ is a measurable function $f : S_{1} \to S_{2}$.
Let $(\outcomes, \algebra, \probability)$ denote a probability space, where $\outcomes$ is a nonempty
set of outcomes, $\algebra$ is a standard Borel $\sigma$-algebra
% (e.g., Borel $\sigma$-algebra if $\outcomes$ is a topological space)
over $\outcomes$, and
$\probability : \algebra \to [0, 1]$ is a probability measure on $(\outcomes, \algebra)$.
The set $\mfunctions(\outcomes, \reals)$ can be viewed as the set of functions that assign
a cost to each outcome. Each function $f \in \mfunctions(\outcomes, \reals)$ is referred to as
\emph{a random cost variable}.
A \emph{risk measure} is a mapping $\risk : \mfunctions(\outcomes, \reals) \to \reals$, which assigns
a real-valued assessment of risk to a given random cost variable.

For brevity, given random cost variables $f, f' \in \mfunctions(\outcomes, \reals)$
and a binary relation $\star \in \{ \le, <, \ge, >, = \}$,
we write \(\probability(f \star f')\)
to denote \(\probability\big(\{\outcome \in \outcomes \mid f(\outcome) \star f'(\outcome)
\}\big)\).
Similarly, for $d \in \reals$, we write
$\probability(f \star d)$ to denote
$\probability(\{\outcome \in \outcomes \mid f(\outcome) \star d\})$.
A statement is said to hold \emph{almost surely}
if it is true for all outcomes except for a set of probability zero.
In this notation, $f \star f'$ almost surely means $\probability(f \star f') = 1$.

Risk measures are typically required to satisfy a collection of axioms
that ensure coherent and consistent assessments of uncertainty \cite{Artzner:1999:Coherent,Majumdar:2020:How}.
We only require the monotonicity property,
which captures the minimal consistency condition that larger costs
should not be assessed as less risky.

\begin{assumption}[Monotonicity]
  For any $f, f' \in \mfunctions(\outcomes, \reals)$,
  if $f \leq f'$ almost surely, then $\risk(f) \leq \risk(f')$.
  \label{assumption:monotonicity}
\end{assumption}

\begin{figure}
  \begin{tikzpicture}
  \begin{axis}[
    width=0.9\linewidth,
    height=0.4\linewidth,
    xlabel={Random cost variable $f$},
    ylabel={Probability density},
    xmin=0, xmax=12,
    ymin=0, ymax=0.25,
    axis lines=left,
    axis on top,
    samples=300,
    domain=0:10,
    clip=false,
    xtick={0,2,4,6,8,10},
    tick label style={font=\scriptsize},
    label style={font=\scriptsize},
    legend style={font=\scriptsize},
    ]

    % --- PDF ---
    % \addplot[thick]
    % {3.125*((x + 0.64)*(10 - x)/(x+5)^3)};
    % {0.001 + 0.0014*(10-x)^2};

    % --- Parameters ---
    \def\fmax{10}
    \def\Ef{2.965}
    \def\VaR{6.113}
    \def\CVaR{7.326}
    \def\ytop{0.22}

    % --- Shade area up to VaR to indicate alpha ---
    \addplot[
    domain=0:\VaR,
    draw=none,
    fill=blue!10
    ]
    {3.125*((x + 0.64)*(10 - x)/(x+5)^3)}
    \closedcycle;

    % --- Label alpha with an arrow for clarity ---
    \node (alpha_label) at (axis cs: 4.7, 0.15) {\scriptsize $\alpha = 0.9$};
    \draw[->, >=stealth, thin] (alpha_label) -- (axis cs: 4, 0.08);

    % --- PDF Line (drawn after fill so it's on top) ---
    \addplot[thick, black]
    {3.125*((x + 0.64)*(10 - x)/(x+5)^3)};

    % Expected value
    \draw[dashed, black, thick] (axis cs:\Ef,0) -- (axis cs:\Ef,\ytop)
    node[anchor=south, color=black] {\scriptsize $\mathbb{E}[f]$};

    % VaR
    \draw[solid, red!80!black, thick] (axis cs:\VaR,0) -- (axis cs:\VaR,\ytop)
    node[anchor=south] at (axis cs:\VaR-0.3,0.22) {\scriptsize $\text{VaR}_{0.9}$};

    % CVaR
    \draw[solid, blue!80!black, thick] (axis cs:\CVaR,0) -- (axis cs:\CVaR,\ytop)
    node[anchor=south] at (axis cs:\CVaR+0.3,0.22) {\scriptsize $\text{CVaR}_{0.9}$};

    % Worst Case
    \draw[dashed, black, thick] (axis cs:\fmax,0) -- (axis cs:\fmax,\ytop)
    node[anchor=south] {\scriptsize Worst-case};

    % % --- Label alpha ---
    % \node[anchor=north east] at (axis cs:\VaR/2+1.2,0.05)
    % {$\alpha = 0.9$};

    % --- Expected cost ---
    % \addplot[densely dotted, thick]
    % coordinates {(\Ef,0) (\Ef,\ytop)};
    % \node[anchor=north] at (axis cs:\Ef,0.25)
    % {$\mathbb{E}[f]$};

    % % --- VaR ---
    % \addplot[densely dotted, thick]
    % coordinates {(\VaR,0) (\VaR,\ytop)};
    % \node[anchor=north] at (axis cs:\VaR-0.7,0.25)
    % {$\text{VaR}_{0.9}$};

    % % --- CVaR ---
    % \addplot[densely dotted, thick]
    % coordinates {(\CVaR,0) (\CVaR,\ytop)};
    % \node[anchor=north] at (axis cs:\CVaR+0.4,0.25)
    % {$\text{CVaR}_{0.9}$};

    % % --- Worst case ---
    % \addplot[densely dotted, thick]
    % coordinates {(\fmax,0) (\fmax,\ytop)};
    % \node[anchor=north] at (axis cs:\fmax+0.6,0.25)
    % {Worst-case};

  \end{axis}
\end{tikzpicture}
  \vspace{-3mm}
  \caption{Illustrative probability density of a random cost variable $f$,
    highlighting key risk measures for $\alpha = 0.9$.
    \vspace{-5mm}}
  \label{fig:var}
\end{figure}

Many standard risk measures satisfy this property,
including expected cost, worst-case assessment,
Value at Risk (VaR), and Conditional Value at Risk (CVaR).
The VaR at level $\alpha$ for a random cost variable $f$ is defined as
$\VaR_{\alpha}(f) = \min\{ d \mid \probability(f \leq d) \geq \alpha \}$
and represents the $\alpha$-quantile of $f$
(some definitions \cite{Majumdar:2020:How} alternatively use the $(1-\alpha)$-quantile).
The CVaR at level $\alpha$ for $f$ is defined as
$\CVaR_{\alpha}(f) = \inf_{\beta \in \reals}\big\{\beta + \frac{1}{1-\alpha} \expectation[f -
\beta]^{+} \big\}$, where $[f-\beta]^{+} = \max\{f-\beta, 0\}$.
Under mild smoothness assumptions, %on the cumulative distribution function of $f$,
CVaR simplifies to the expected cost conditioned on exceeding VaR:
$\CVaR_{\alpha}(f) = \expectation\big[f | f > \VaR_{\alpha}(f)\big]$,
a formulation commonly used in literature \cite{Pflug:2000:Some}.
As illustrated in Figure~\ref{fig:var}, $\VaR_{\alpha}$ represents
the lower bound of the $(1-\alpha)$ worst-case outcomes.
However, it does not account for the severity of costs beyond this threshold.
In contrast, $\CVaR_{\alpha}$ captures the ``expectation of the tail'', providing a
measure of the average cost incurred during extreme events.
In long-tailed distributions, the gap between VaR and CVaR
highlights the ability of CVaR to capture tail risk that VaR ignores.
Note that VaR satisfies monotonicity but is not, in general,
subadditive and therefore is not a coherent risk measure.
Our results rely only on monotonicity and therefore
apply to both coherent risk measures (e.g., expected cost, worst-case, CVaR)
and non-coherent ones (e.g., VaR).

\section{Uncertain System-Environment Interaction}
Consider a system that interacts with an external environment.
The system state evolves under its control inputs,
whereas the environment state is not directly controllable.
However, the environment may respond to the resulting system trajectory.
% which is induced by the applied control inputs.
For example, when an autonomous vehicle slows down near a crosswalk,
it may increase the likelihood that a pedestrian crosses in front of the vehicle,
compared to a scenario where the vehicle speeds up.

Let $\rtrajectories$ and $\etrajectories$
represent the sets of system and environment trajectories.
A trajectory represents all information about the evolution of the system or environment that is
relevant for evaluation and interaction, and may, for example, take the form of a continuous-time
signal or a discrete-time sequence.
We assume that both $\rtrajectories$ and $\etrajectories$ are equipped with respective $\sigma$-algebras,
$\algebra_{s}$ for $\rtrajectories$ and $\algebra_{e}$ for $\etrajectories$, such that
that the pairs $(\rtrajectories, \algebra_{s})$ and $(\etrajectories, \algebra_{e})$ form standard Borel
spaces.
% , meaning that the sets of trajectories are measurable with respect to the respective
% $\sigma$-algebras.

The interaction between the system and the environment is modeled by a Borel function
$\interaction : \rtrajectories \times \outcomes \to \etrajectories$,
which is defined on the probability space $(\outcomes, \algebra, \probability)$.
For each system trajectory $\rtraj \in \rtrajectories$,
we define a random variable $\interaction_{\rtraj} : \outcomes \to \etrajectories$, which maps outcomes
from the probability space $(\outcomes, \algebra, \probability)$ to environment trajectories as
$\interaction_{\rtraj}(\outcome) = \interaction(\rtraj, \outcome)$.
The set $\outcomes$ can be interpreted as representing all possible environment types
or scenarios that could influence the interaction between the system and its environment.
Each $\outcome \in \outcomes$ corresponds to a specific environment scenario,
such as the level of aggressiveness or attentiveness of other road users.
These scenarios capture the unique characteristics of the environment, and when paired with a
system trajectory, they define the resulting environment trajectory.
%
% $\interaction_{\rtraj}$ is a random variable for every robot
% trajectory $\rtraj \in \rtrajectories$.

\begin{remark}
  The assumption that $(\rtrajectories, \algebra_{s})$ and $(\etrajectories, \algebra_{e})$
  are standard Borel spaces and that $\interaction$ is a Borel function are mild and
  exclude only pathological cases that are unlikely to occur in real-world applications.
  Borel sets include a broad class of sets commonly encountered in control applications,
  such as finite or countable sets and complete separable metric spaces like $\reals^{n}$.
  % Standard Borel spaces cover a broad class of sets encountered in robotics,
  % including finite and countable spaces, as well as separable metric spaces
  % such as Euclidean spaces and configuration spaces commonly encountered in robotic systems.
  These properties ensure broad applicability while maintaining the necessary mathematical structure
  for defining measurable sets and functions.
  Similarly, requiring the interaction $\interaction$ to be a Borel function ensures
  essential measurability properties for mathematical rigor,
  without imposing restrictive or impractical conditions on the system.
\end{remark}

\begin{example}
  Let $S_{s}$ and $S_{e}$ be separable completely metrizable topological spaces that represent
  the state spaces of the system and the environment, respectively.
  Each of these spaces is equipped with the Borel $\sigma$-algebra, which is generated by the
  open sets in $S_{s}$ and $S_{e}$, respectively.
  Consider trajectories of finite length $T \in \naturals$, so that
  $\rtrajectories \subseteq S_{s}^{T}$ and $\etrajectories \subseteq S_{e}^{T}$
  represent the sets of system and environment trajectories.
  Each trajectory is a finite sequence of states:
  a trajectory $\rtraj \in \rtrajectories$ and $\etraj \in \etrajectories$ are of the form
  $\rtraj = s_{s,1}, \ldots, s_{s,T}$ and
  $\etraj = s_{e,1}, \ldots, s_{e,T}$,
  where $s_{s,i} \in S_{s}$, and $s_{e,i} \in S_{e}$, $\forall i$.
  The sets $\rtrajectories$ and $\etrajectories$ are assumed to be Borel and are equipped with the
  induced $\sigma$-algebras.
\end{example}

\section{Risk-Aware Rulebooks}
Consider defining $\realizations = \rtrajectories \times \etrajectories$
as the set of realizations.
A rulebook $\rulebook$ then induces a preorder on
this set, meaning that it evaluates realizations \emph{after the fact}--that is, once both the
system and environment trajectories are fully determined.

On the other hand, if the environment scenario $\outcome \in \outcomes$ is known,
we can define $\realizations = \rtrajectories$.
The rulebook then induces a preorder $\preorder_{\rulebook, \outcome}$ on system trajectories
$\rtrajectories$ such that for any $\rtraj, \rtraj' \in \rtrajectories$,
$\rtraj \preorder_{\rulebook, \outcome} \rtraj'$ if
$(\rtraj, \interaction(\rtraj, \outcome)) \rbleq
(\rtraj', \interaction(\rtraj', \outcome))$, i.e.,
$\rtraj$ is at least as good $\rtraj'$ under environment scenario $\outcome$.

However, during planning, the system must choose its trajectory without knowing the exact environment
scenario $\outcome$.
At best, it may have access to a probability distribution over $\outcomes$.
To address this challenge, we utilize the concept of risk to establish a preorder among system
trajectories, enabling planning under uncertain environments.

\begin{definition}
  \label{def:rarule}
  Let
  $\rbrule : \rtrajectories \times \etrajectories \to \nnreals$
  be a rule, equipped with a risk measure
  $\rrisk : \mfunctions(\outcomes,\reals)\to\reals$
  and a threshold $\rthreshold \in \reals$.
  Assume that $\rbrule$ is measurable.
  Then, for every $\rtraj \in \rtrajectories$,
  the function $\rbrule_{\rtraj} : \outcomes \to \nnreals$ defined as
  $\rbrule_{\rtraj}(\outcome) = \rbrule(\rtraj, \interaction_{\rtraj}(\outcome))$
  is also measurable, i.e., $\rbrule_{\rtraj} \in \mfunctions(\outcomes, \nnreals)$.
  The \emph{risk-aware rule} induced by
  $(\rbrule,\rrisk,\rthreshold)$
  is the function
  \(\rrrule : \rtrajectories \to \nnreals\)
  defined as
  \begin{equation}
    \rrrule(\rtraj) = \max\{\rrisk(\rbrule_{\rtraj})-\rthreshold,0\}.
    \label{eq:rarule}
  \end{equation}
  In general, different rules $\rbrule$ and $\rbrule'$ may have different risk measures
  and thresholds.
  For notational simplicity, however, we do not explicitly indicate these rule-specific choices in
  the notation $\rrrule$, which denotes the risk-aware version of $\rbrule$
  using its associated risk measure and threshold.
\end{definition}

% \begin{definition}
%   \label{def:rarule}
%   A \emph{risk-aware rule} $\rarule$ consists of a rule
%   $\rbrule : \rtrajectories \times \etrajectories \to \nnreals$,
%   an associated risk measure
%   $\rrisk : \mfunctions(\outcomes, \reals) \to \reals$,
%   and a risk threshold $\rthreshold \in \reals$
%   such that for every system trajectory $\rtraj \in \rtrajectories$,
%   the function $\rbrule_{\rtraj} : \outcomes \to \nnreals$
%   defined as $\rbrule_{\rtraj}(\outcome) = \rbrule(\rtraj,
%   \interaction_{\rtraj}(\outcome))$
%   is measurable, i.e.,
%   $\rbrule_{\rtraj} \in \mfunctions(\outcomes, \nnreals)$.
% \end{definition}

Intuitively, each system trajectory $\rtraj \in \rtrajectories$ induces a random variable
$\rbrule_{\rtraj}$ that quantifies the degree of violation of $\rtraj$
with respect to the rule $\rbrule$ under environmental uncertainty.
% Under this definition, each system trajectory $\rtraj \in \rtrajectories$
% is associated with a random variable $\rbrule_{\rtraj}$, where
% $\rbrule_{\rtraj}(\outcome)$ represents the degree of violation of $\rtraj$
% with respect to the rule $\rbrule$ under the environment scenario $\outcome$.
The distribution of $\rbrule_{\rtraj}$ is determined by the probability measure $\probability$,
defined over the probability space $(\outcomes, \algebra, \probability)$.
The risk of $\rtraj$ with respect to the rule $\rbrule$ is defined as
$\rrisk(\rbrule_{\rtraj})$.
Note that $\rrisk$ can be chosen independently for each rule
and can be any risk measure, including expected value, worst-case, VaR, or CVaR.
To simplify the notation, we use $\rrisk(\rtraj)$ throughout the paper to
represent this risk, i.e., $\rrisk(\rtraj) = \rrisk(\rbrule_{\rtraj})$.
% For each risk-aware rule \(\rarule\), we introduce a rule
% \(\rrrule : \rtrajectories \to \nnreals\) given by
% \(\rrrule(\rtraj) = \max\{\rrisk(\rbrule_{\rtraj})-\rthreshold, 0\}\).
The risk-aware rule \(\rrrule\) measures the amount by which this risk exceeds the prescribed
threshold.
This thresholding treats all trajectories below the threshold as equally acceptable under the
rule, preventing over-optimization of higher-priority rules and allowing lower-priority rules to meaningfully
refine comparisons.

With this definition, we can now define the concepts of ``safety'' and ``riskiness''
of system trajectories.

\begin{definition}
  \label{def:safe_rule}
  A system trajectory $\rtraj \in \rtrajectories$ is
  \emph{safe with respect to a risk-aware rule $\rrrule$}
  if \(\rrrule(\rtraj) = 0\), i.e.,
  the associated risk \(\rrisk(\rtraj)\) does not exceed the threshold $\rthreshold$,
  % i.e., $\rrisk(\rtraj) \leq \rthreshold$
\end{definition}

\begin{definition}
  \label{def:riskier_rule}
  A system trajectory $\rtraj \in \rtrajectories$ is
  \emph{strictly less risky with respect to a risk-aware rule $\rrrule$}
  than $\rtraj' \in \rtrajectories$
  if \(\rrrule(\rtraj) < \rrrule(\rtraj') \), i.e., both of the following conditions are satisfied:
  \begin{enumerate}
  \item $\rrisk(\rtraj') > \rthreshold$ (the risk of $\rtraj'$ exceeds the threshold),
    and
  \item $\rrisk(\rtraj') > \rrisk(\rtraj)$ (the risk of $\rtraj'$ is more than
    that of $\rtraj$).
  \end{enumerate}
  In this case, we say that $\rtraj'$ is \emph{strictly riskier with respect to $\rrrule$}
  than $\rtraj$.
\end{definition}

Given a rulebook \(\rulebook = \langle \ruleset, \preorder \rangle\)
on realizations \(\rtrajectories \times \etrajectories\),
suppose each $\rbrule : \rtrajectories \times \etrajectories \to \nnreals$
is equipped with a risk measure $\rrisk$ and a threshold $\rthreshold$.
Each such rule induces a corresponding risk-aware rule
$\rrrule : \rtrajectories \to \nnreals$,
defined on system trajectories $\rtrajectories$.
The collection of these induced rules, together with the inherited preorder,
forms a risk-aware rulebook on system trajectories,
as formalized below.

% Collecting these induced rules yields a new rule set $\raruleset$.
% The preorder on \(\raruleset\) is inherited from \(\preorder\).
% This construction gives rise to a risk-aware rulebook on system trajectories,
% as formalized in the following definition.

\begin{definition}
  \label{def:rarulebook}
  Let \(\rulebook = \langle \ruleset, \preorder \rangle \) be a rulebook on
  \(\rtrajectories \times \etrajectories\).
  For each \(r \in \ruleset\), let \(\rrrule\) denote the induced
  risk-aware rule on \(\rtrajectories\).
  The induced \emph{risk-aware rulebook} on \(\rtrajectories\) is a tuple
  \(\rarulebook = \langle \raruleset, \preorder_{\text{risk}} \rangle\), where
  \(\raruleset = \{\rrrule \ | \ r \in \ruleset\}\) and
  the preorder \(\preorder_{\text{risk}}\) is defined by
  \(\rrrule \preorder_{\text{risk}} \rrrule'\) if and only if
  \(\rbrule \preorder \rbrule '\).
\end{definition}

\section{Safe and Optimal Trajectories}

A risk-aware rulebook $\rarulebook$ induces a relation $\raleq$ on $\rtrajectories$
and defines the safety of system trajectories.

\begin{definition}
  \label{def:raleq}
  Given a risk-aware rulebook \(\rarulebook = \langle \raruleset, \preorder_{\text{risk}}
  \rangle\),
  we say that a system trajectory $\rtraj \in \rtrajectories$ is \emph{no riskier than}
  $\rtraj' \in \rtrajectories$, denoted by $\rtraj \raleq \rtraj'$,
  if for every risk-aware rule $\rrrule \in \raruleset$ such that
  $\rrrule(\rtraj) > \rrrule(\rtraj')$,
  there exists a higher priority rule $\rrrule' >_{\text{risk}} \rrrule$
  such that $\rrrule'(\rtraj') > \rrrule'(\rtraj)$.
\end{definition}

\begin{definition}
  \label{def:riskier}
  A system trajectory $\rtraj \in \rtrajectories$ is \emph{strictly less risky} than
  $\rtraj' \in \trajectories$, or equivalently,
  $\rtraj'$ is \emph{strictly riskier} than $\rtraj$,
  denoted by $\rtraj \ralt \rtraj'$,
  if $\rtraj \raleq \rtraj'$ but $\rtraj' \not\raleq \rtraj$.
\end{definition}

\begin{definition}
  \label{def:safe}
  A system trajectory $\rtraj \in \rtrajectories$ is \emph{safe}
  if \(\rrrule(\rtraj) = 0\) for all risk-aware rules $\rrrule \in \raruleset$.
\end{definition}

Since $\rarulebook$ consists of a well-defined rule set equipped with a preorder,
it forms a rulebook on $\rtrajectories$ as formally stated below.

\begin{proposition}
  $\rarulebook$ is a rulebook on $\rtrajectories$.
  \label{prop:rulebook}
\end{proposition}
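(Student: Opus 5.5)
To prove Proposition~\ref{prop:rulebook}, I will check the two requirements of Definition~\ref{def:rulebook} with $\realizations = \rtrajectories$. First, every element of $\raruleset$ must be a rule in the sense of Definition~\ref{def:rule}, namely a function $\rtrajectories \to \nnreals$. Second, $\preorder_{\text{risk}}$ must be a preorder on $\raruleset$.

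For the first requirement, I take $\rbrule \in \ruleset$. Definition~\ref{def:rarule} states that $\rbrule_{\rtraj}$ is measurable, so it lies in $\mfunctions(\outcomes,\nnreals) \subseteq \mfunctions(\outcomes,\reals)$. This places it in the domain of $\rrisk$, and hence $\rrisk(\rbrule_{\rtraj})$ is a well-defined real number for every $\rtraj$. The value $\rrrule(\rtraj) = \max\{\rrisk(\rbrule_{\rtraj}) - \rthreshold, 0\}$ is then a real number that is nonnegative. It follows that $\rrrule : \rtrajectories \to \nnreals$ is a rule. I will also note that finiteness of $\ruleset$ carries over, since $\raruleset$ is the image of $\ruleset$ under $\rbrule \mapsto \rrrule$.

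For the second requirement, the main obstacle is that $\preorder_{\text{risk}}$ is defined by transporting $\preorder$ along the map $\rbrule \mapsto \rrrule$. This map need not be injective, because two distinct rules (with their risk measures and thresholds) may induce the same function on $\rtrajectories$. To keep the relation well-defined, I will treat $\raruleset$ as indexed by $\ruleset$: each $\rrrule$ is formally the pair consisting of the function and the rule $\rbrule$ it comes from, which is how the paper's notation already treats it. Then $\rbrule \mapsto \rrrule$ is a bijection, and $\preorder_{\text{risk}}$ is the image of $\preorder$ under it. Without this indexing, I would instead have to define $f \preorder_{\text{risk}} g$ as ``there exist preimages $\rbrule$ of $f$ and $\rbrule'$ of $g$ with $\rbrule \preorder \rbrule'$'', and transitivity could then fail.

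With the indexing in place, reflexivity holds because $\rbrule \preorder \rbrule$ gives $\rrrule \preorder_{\text{risk}} \rrrule$. For transitivity, suppose $\rrrule \preorder_{\text{risk}} \rrrule'$ and $\rrrule' \preorder_{\text{risk}} \rrrule''$. These give $\rbrule \preorder \rbrule'$ and $\rbrule' \preorder \rbrule''$, so transitivity of $\preorder$ yields $\rbrule \preorder \rbrule''$, that is, $\rrrule \preorder_{\text{risk}} \rrrule''$. Therefore $\rarulebook = \langle \raruleset, \preorder_{\text{risk}}\rangle$ is a rulebook on $\rtrajectories$.

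Finally, I will remark that, as a consequence, the result of \cite{Censi:2019:Liability} applies and $\raleq$ is a preorder on $\rtrajectories$.
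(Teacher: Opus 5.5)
Your proof is correct and follows the same reasoning the paper gives in its one-sentence justification before the statement: each $\rrrule$ is a well-defined map $\rtrajectories \to \nnreals$ because $\rbrule_{\rtraj}$ is measurable and taking the $\max$ with $0$ makes it nonnegative, and $\preorder_{\text{risk}}$ inherits reflexivity and transitivity from $\preorder$. You also note that $\rbrule \mapsto \rrrule$ need not be injective and fix this by indexing $\raruleset$ by $\ruleset$; this is a worthwhile refinement that the paper leaves implicit in Definition~\ref{def:rarulebook}.
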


Proposition \ref{prop:rulebook} has several important implications.
First, it follows from \cite{Censi:2019:Liability} that
the induced relation $\raleq$ is a preorder on $\rtrajectories$.
Although the relation $\rbleq$ defines a preorder on $\rtrajectories \times \etrajectories$,
the relation $\raleq$ extends this structure to $\rtrajectories$ alone.
This allows system trajectories to be compared and ranked without requiring
knowledge of the specific environment scenario and
provides a formal basis for defining optimal system trajectories.
Additionally, the construction of the risk-aware rules makes it natural for a system designer to
start from a classical rulebook (e.g., one specified by a regulator and used for retrospective
evaluation under fully realized environmental outcomes) and systematically extend it to the
uncertain setting by equipping each rule with an appropriate risk measure and threshold.

Finally, Proposition \ref{prop:rulebook} ensures that existing control synthesis algorithms
developed for rulebooks \cite{Wongpiromsarn:2026:Formal} can be applied directly.
These algorithms rely on structural assumptions on the rules to ensure tractability.
In the risk-aware setting, this limits their applicability to expectation-based
risk measures and to rules that admit an additive decomposition over concatenated trajectories,
i.e., $r((\tau_{1}, \xi_{1}) \circ (\tau_{2}, \xi_{2})) = r(\tau_{1}, \xi_{1}) + r(\tau_{2},
\xi_{2})$, where $\circ$ denotes concatenation applied componentwise to system and environment trajectories.
This structure enables dynamic programming–type decompositions that are central to efficient synthesis.
While these assumptions are restrictive, they are consistent with common practice in reinforcement
learning and optimal control, where additive cost structures and expectation-based objectives
are often imposed for computational tractability.
At the same time, the risk-aware rulebook formalism provides greater expressiveness by allowing structured
relationships between objectives, rather than requiring them to be aggregated into a single reward
or cost function.
More general risk measures introduce additional computational challenges,
although VaR has been extensively studied under chance-constrained formulations, which
can be interpreted as a special case of a risk-aware rulebook with a
two-level structure (a top level encoding safety via chance constraints and a lower level
performance objective).
Overall, Proposition \ref{prop:rulebook} implies that new algorithms do not need to be developed
specifically for risk-aware rulebooks. Instead, it provides a foundation for extending existing
rulebook-based control synthesis methods to broader classes of rules and risk measures, depending
on the structural assumptions imposed.

\begin{definition}
  \label{def:optimal}
  A system trajectory $\rtraj^{*} \in \rtrajectories$ is \emph{optimal}
  if there is no other trajectory $\rtraj \in \rtrajectories$
  such that $\rtraj \ralt \rtraj^{*}$.
\end{definition}

We now present formal results regarding the relationship between safe and optimal
system trajectories.

\begin{proposition}
  \label{prop:safe-and-optimal-relationship}
  The following relationship between safe and optimal trajectories hold.
  \begin{enumerate}
  \item\label{item:safe-is-optimal} Every safe trajectory is optimal.
  \item\label{item:safe-is-cone} If \(\rtraj\) is safe, then any
    \(\rtraj' \raleq \rtraj\) is also safe and optimal.
  \item\label{item:safe-iff-optimal} If a safe system trajectory exists, then a system trajectory is
    safe if and only if it is optimal.
  \end{enumerate}
\end{proposition}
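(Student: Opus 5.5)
The plan is to work directly from Definition~\ref{def:raleq} and use the fact that every risk-aware rule is nonnegative, since $\rrrule(\rtraj) = \max\{\rrisk(\rtraj)-\rthreshold,0\} \ge 0$. Proposition~\ref{prop:rulebook}, together with \cite{Censi:2019:Liability}, tells us that $\raleq$ is a preorder; this will only be needed for transitivity-type reasoning, and even that may be avoidable.

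\emph{Item~\ref{item:safe-is-optimal}.} Let $\rtraj^{*}$ be safe, so $\rrrule(\rtraj^{*}) = 0$ for every $\rrrule \in \raruleset$.
\begin{itemize}
\item First I show that $\rtraj^{*} \raleq \rtraj$ for every $\rtraj \in \rtrajectories$. Since $\rrrule(\rtraj) \ge 0 = \rrrule(\rtraj^{*})$, there is no rule with $\rrrule(\rtraj^{*}) > \rrrule(\rtraj)$, so the condition in Definition~\ref{def:raleq} holds vacuously.
\item Hence no $\rtraj$ can satisfy $\rtraj \ralt \rtraj^{*}$: that would require $\rtraj^{*} \not\raleq \rtraj$, which we just ruled out.
\item So $\rtraj^{*}$ is optimal by Definition~\ref{def:optimal}.
\end{itemize}

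\emph{Item~\ref{item:safe-is-cone}.} Let $\rtraj$ be safe and $\rtraj' \raleq \rtraj$. The key step is to show that $\rtraj'$ is safe; optimality then follows from item~\ref{item:safe-is-optimal}.
\begin{itemize}
\item Suppose, for contradiction, that $\rtraj'$ is not safe. Then the set $V = \{\rrrule \in \raruleset \mid \rrrule(\rtraj') > 0 = \rrrule(\rtraj)\}$ is nonempty.
\item Since $\rtraj' \raleq \rtraj$, each $\rrrule \in V$ has a strictly higher-priority rule $\rrrule'$ with $\rrrule'(\rtraj) > \rrrule'(\rtraj')$.
\item But $\rrrule'(\rtraj) = 0$ because $\rtraj$ is safe, and $\rrrule'(\rtraj') \ge 0$, so the inequality $0 > \rrrule'(\rtraj')$ is impossible.
\end{itemize}
This gives the contradiction directly. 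I therefore do not expect to need a maximal-element argument over the finite preorder, which finiteness of $\ruleset$ would otherwise supply.

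\emph{Item~\ref{item:safe-iff-optimal}.} Assume a safe trajectory $\rtraj^{s}$ exists.
\begin{itemize}
\item Safe implies optimal by item~\ref{item:safe-is-optimal}.
\item Conversely, let $\rtraj^{*}$ be optimal. From the argument in item~\ref{item:safe-is-optimal}, $\rtraj^{s} \raleq \rtraj^{*}$. Optimality forbids $\rtraj^{s} \ralt \rtraj^{*}$, so we must also have $\rtraj^{*} \raleq \rtraj^{s}$.
\item Item~\ref{item:safe-is-cone} then shows that $\rtraj^{*}$ is safe.
\end{itemize}

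The only step requiring care is the contradiction in item~\ref{item:safe-is-cone}, which hinges on nonnegativity of the risk-aware rules coming from the $\max\{\cdot,0\}$ in \eqref{eq:rarule}. Beyond that, the proof is a straightforward unwinding of the definitions. Notably, monotonicity of $\rrisk$ is not needed here.
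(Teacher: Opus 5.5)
Your proof is correct. Items 1 and 3 follow the paper's logic. In item 3 the paper argues by a standalone contradiction (a safe $\rtraj$ would satisfy $\rtraj \ralt \rtraj^{*}$), while you reduce it to items 1 and 2, but both rest on the same nonnegativity observation.

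The genuine difference is in item 2. The paper's route is:
\begin{itemize}
\item $\rtraj$ is optimal by item 1, so $\rtraj' \raleq \rtraj$ forces $\rtraj \raleq \rtraj'$;
\item hence $\rtraj' \sim_{\rarulebook} \rtraj$;
\item it then imports the classical fact that $\sim_{\rulebook}$-equivalence means all rule values coincide, which gives $\rrrule(\rtraj') = 0$.
\end{itemize}
That imported characterization needs $\rarulebook$ to be a rulebook (Proposition~\ref{prop:rulebook}) and a finite rule set. It is proved by ruling out an infinite strictly increasing chain of compensating rules, or equivalently by picking a maximal rule among those that differ.

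You instead apply Definition~\ref{def:raleq} once. Any rule violated by $\rtraj'$ would need a higher-priority $\rrrule'$ with $\rrrule'(\rtraj) > \rrrule'(\rtraj')$, i.e.\ $0 > \rrrule'(\rtraj')$. The $\max\{\cdot,0\}$ in the construction forbids this, so the would-be chain stops at its first step.

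Your argument is therefore self-contained and uses neither finiteness of $\raruleset$ nor the equivalence characterization. The paper's version is shorter given the imported machinery and makes explicit that $\rtraj'$ is equivalent to $\rtraj$. When $\rtraj$ is safe, that is the same conclusion as $\rtraj'$ being safe.
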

\begin{proof}
  \ref{item:safe-is-optimal})   Consider a safe trajectory $\rtraj$.
  By Definition \ref{def:safe}, \(\rrrule(\rtraj) = 0\) for all \(\rrrule \in \raruleset\).
  Thus, for any trajectory \(\rtraj' \in \rtrajectories\),
  we have \(\rrrule(\rtraj) \le \rrrule(\rtraj')\) for all \(\rrrule \in \raruleset\)
  so by Definition \ref{def:raleq}, \(\rtraj \raleq \rtraj'\).
  Thus, \(\rtraj\) is optimal.

  \ref{item:safe-is-cone}) From Item \ref{item:safe-is-optimal}, $\rtraj$ is optimal.
  Since $\rtraj' \raleq \rtraj$ but $\rtraj$ is optimal,
  it must be the case that $\rtraj \raleq \rtraj'$, which implies that
  \(\rtraj' \sim_{\rarulebook} \rtraj\).
  So,  \(\rrrule(\rtraj') = 0\) for all $\rrrule \in \raruleset$.

  \ref{item:safe-iff-optimal}) Item~\ref{item:safe-is-optimal} shows that safe trajectories are
  optimal.
  So we only need to show that if a safe trajectory $\rtraj$ exists,
  then any optimal trajectory $\rtraj^*$ is safe.
  Suppose, for contradiction, that $\rtraj^*$ is not safe.
  Then, $\rrrule(\rtraj^*) \geq 0$ for all $\rrrule \in \raruleset$
  and there exist a rule $\rrrule^*$ with $\rrrule^*(\rtraj^*) > 0$.
  Since $\rrrule(\rtraj) = 0$ for all $\rrrule \in \raruleset$,
  we can conclude that $\rtraj \raleq \rtraj^*$ but $\rtraj^* \not\raleq \rtraj$,
  which implies that $\rtraj \ralt \rtraj^*$,
  contradicting the optimality of $\rtraj^*$.
\end{proof}

\section{Rational Tradeoff of Optimal Trajectories}
This section shows that optimal trajectories satisfy principled tradeoff properties under the
risk-aware rulebook.
In particular, if a trajectory is optimal, then any strict improvement under one rule must be
offset by a compensating deterioration under another rule that is not lower in priority.
This property provides a concrete notion of explainability in our setting: preferences between
trajectories can be justified in terms of rule-level comparisons that are consistent with the rule
preorder.
Specifically, when one trajectory is selected over another, the justification can be traced to (i)
the rules under which it incurs strictly smaller risk,
(ii) the fact that this rule is not lower in priority than the competing rules,
and (iii) the existence of a set of environment outcomes with positive probability on which the
corresponding rule violation is strictly smaller.
Thus, optimal trajectories are rational in the sense that their selection is supported by
explicit and structurally consistent tradeoffs across competing objectives under uncertainty.

% \subsection{Optimal Trajectories Are Rational}
% \label{sec:optimal_rational}
Consider a rulebook \(\rulebook = \langle \ruleset, \preorder \rangle\)
on \(\rtrajectories \times \etrajectories\),
where each rule $\rbrule \in \ruleset$
is equipped with a risk measure $\rrisk$ and a threshold $\rthreshold$.
Let $\rtraj^{*} \in \rtrajectories$ be an optimal system trajectory with respect
to the induced risk-aware rulebook \(\rarulebook\) on $\rtrajectories$.
We will show that $\rtraj^{*}$ is rational in the sense that
if another trajectory $\rtraj'$ appears less risky with respect to a particular rule,
then $\rtraj^{*}$ must be favored by some other rule that is not lower in priority.
In other words, any apparent disadvantage under one rule is compensated by an advantage under a rule
that is not less important,
and this advantage occurs on a set of environment scenarios of positive probability.
The following proposition makes this tradeoff principle precise.

 \begin{proposition}
  \label{prop:optimal_risk}
  Suppose $\rrisk$ satisfies
  Assumption \ref{assumption:monotonicity} for all $\rbrule \in \ruleset$
  and there exists a trajectory $\rtraj'$ with $\rrrule'(\rtraj') < \rrrule'(\rtraj^{*})$
  for some $\rbrule' \in \ruleset$.
  Then there must exist a rule $\rbrule^{*} \not< \rbrule'$ such that
  \(\probability(\rbrule^{*}_{\rtraj'} > \rbrule^{*}_{\rtraj^{*}}) > 0\).
  % i.e., there exists a nonzero probability of environment scenarios where
  % $\rtraj'$ incurs a larger violation of $\rbrule^{*}$ than $\rtraj^{*}$.
\end{proposition}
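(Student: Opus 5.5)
Let \(\rrrule'\) and \(\rrrule^{*}\) denote the risk-aware rules induced by \(\rbrule'\) and \(\rbrule^{*}\). The argument is by contradiction and combines the optimality of \(\rtraj^{*}\) (Definition~\ref{def:optimal}) with monotonicity (Assumption~\ref{assumption:monotonicity}). Suppose that for every rule \(\rbrule^{*} \in \ruleset\) with \(\rbrule^{*} \not< \rbrule'\) we have \(\probability(\rbrule^{*}_{\rtraj'} > \rbrule^{*}_{\rtraj^{*}}) = 0\). Then \(\rbrule^{*}_{\rtraj'} \le \rbrule^{*}_{\rtraj^{*}}\) almost surely for all such rules. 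Monotonicity gives \(\rrisk^{*}(\rtraj') \le \rrisk^{*}(\rtraj^{*})\), where \(\rrisk^{*}\) is the risk measure attached to \(\rbrule^{*}\). Since \(d \mapsto \max\{d - \threshold, 0\}\) is nondecreasing, it follows that \(\rrrule^{*}(\rtraj') \le \rrrule^{*}(\rtraj^{*})\) for every risk-aware rule whose underlying rule is not strictly below \(\rbrule'\). In particular this holds for \(\rrrule'\) itself, since \(\rbrule' \not< \rbrule'\), and it is consistent with the hypothesis \(\rrrule'(\rtraj') < \rrrule'(\rtraj^{*})\).

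Next I will show \(\rtraj' \ralt \rtraj^{*}\), contradicting optimality. For \(\rtraj' \raleq \rtraj^{*}\), take any risk-aware rule \(\rrrule\) with \(\rrrule(\rtraj') > \rrrule(\rtraj^{*})\). By the previous paragraph its underlying rule satisfies \(\rbrule < \rbrule'\). By Definition~\ref{def:rarulebook} this means \(\rrrule' >_{\text{risk}} \rrrule\), and \(\rrrule'(\rtraj^{*}) > \rrrule'(\rtraj')\), so \(\rrrule'\) is the required higher-priority witness in Definition~\ref{def:raleq}.

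For \(\rtraj^{*} \not\raleq \rtraj'\), note that \(\rrrule'(\rtraj^{*}) > \rrrule'(\rtraj')\), so Definition~\ref{def:raleq} would require some \(\rrrule'' >_{\text{risk}} \rrrule'\) with \(\rrrule''(\rtraj') > \rrrule''(\rtraj^{*})\). Such a rule satisfies \(\rbrule'' > \rbrule'\), hence \(\rbrule'' \not< \rbrule'\) by asymmetry of the strict part. The first paragraph then gives \(\rrrule''(\rtraj') \le \rrrule''(\rtraj^{*})\), so no such witness exists. Together the two parts give \(\rtraj' \ralt \rtraj^{*}\), contradicting Definition~\ref{def:optimal}.

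The main point needing care is the bookkeeping between the preorders on \(\ruleset\) and \(\raruleset\). Distinct rules might induce identical risk-aware functions, so \(\raruleset\) should be treated as indexed by \(\ruleset\), as Definition~\ref{def:rarulebook} implicitly does. Under that indexing the strict relations correspond exactly. I also need to use that \(\rbrule'' > \rbrule'\) implies \(\rbrule'' \not< \rbrule'\), which holds because the strict part of a preorder is asymmetric.
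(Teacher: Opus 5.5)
Your proof is correct and uses the same two ingredients as the paper's proof. Optimality of $\rtraj^{*}$ forces some $\rbrule^{*} \not< \rbrule'$ with $\rrrule^{*}(\rtraj^{*}) < \rrrule^{*}(\rtraj')$, and monotonicity turns this into $\probability(\rbrule^{*}_{\rtraj'} > \rbrule^{*}_{\rtraj^{*}}) > 0$; the paper applies these in two steps and simply asserts the first, whereas you merge them into one contradiction and explicitly verify both halves of $\rtraj' \ralt \rtraj^{*}$, supplying the detail the paper omits.
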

\begin{proof}
  Since $\rrrule'(\rtraj') < \rrrule'(\rtraj^{*})$ but $\rtraj' \not\ralt \rtraj^{*}$,
  there must exist a rule $\rbrule^{*} \not< \rbrule'$ such that
  $\rrrule^{*}(\rtraj^{*}) < \rrrule^{*}(\rtraj')$, i.e.,
  \(\max\{\risk_{\rbrule^{*}}(\rtraj^{*}) - \threshold_{\rbrule^{*}},0\} <
  \max\{\risk_{\rbrule^{*}}(\rtraj') - \threshold_{\rbrule^{*}},0\}\).
  Hence, \(\risk_{\rbrule^{*}}(\rtraj^{*}) < \risk_{\rbrule^{*}}(\rtraj')\).
  Suppose, for contradiction, that
  \(
  \rbrule^{*}_{\rtraj'}
  \leq
  \rbrule^{*}_{\rtraj^{*}}
  \)
  almost surely.
  Then, by monotonicity of $\risk_{\rbrule^{*}}$, we get
  \(
  \risk_{\rbrule^{*}}(\rtraj')
  \le
  \risk_{\rbrule^{*}}(\rtraj^{*})
  \),
  a contradiction. So we can conclude that
  \(
  \probability(\rbrule^{*}_{\rtraj'} > \rbrule^{*}_{\rtraj^{*}}) > 0 \).
\end{proof}

The previous result shows that optimality enforces rational tradeoffs at the
level of risk measure values. The next proposition strengthens this insight by
investigating comparisons at the level of individual environment scenarios.
Suppose a candidate trajectory performs strictly better than an optimal trajectory under some rule
for a particular environment scenario.
The proposition shows that such a pointwise advantage cannot constitute a genuine overall improvement
because one of the following must hold:
(i) the advantage occurs only on a null set;
(ii) the optimal trajectory is already safe with respect to that rule,
i.e., its risk does not exceed the prescribed threshold; or
(iii) there exists a compensating advantage under another rule that is not
lower in priority.
Thus, even pointwise disadvantages are justified by a principled tradeoff.
This stronger result can be derived under the assumption that risk measures are strictly monotonic:
\begin{assumption}[Strict Monotonicity]
  \label{assumption:strict_monotonicity}
  For any $f, f' \in \mfunctions(\outcomes, \reals)$, if
  $f \leq f'$ almost surely and
  \(\probability(f < f') > 0\), then $\risk(f) < \risk(f')$.
\end{assumption}

\begin{proposition}
  Suppose $\rrisk$ satisfies Assumption \ref{assumption:strict_monotonicity}
  for all $r \in \ruleset$ and there exists a trajectory $\rtraj'$ with
  $\rbrule'_{\rtraj'}(\omega) <
  \rbrule'_{\rtraj^{*}}(\omega)$ for some $\rbrule' \in \ruleset$ and $\outcome \in \outcomes$.
  Then, one of the following holds:
  \begin{itemize}
  \item \(\probability( \rbrule'_{\rtraj'} < \rbrule'_{\rtraj^{*}}) = 0\);
    % i.e., the improvement occurs only on a null set;
  \item $\risk_{\rbrule'}(\rtraj^{*}) \leq \threshold_{\rbrule'}$, i.e.,
    $\rtraj^{*}$ is safe with respect to $\rrrule'$, or
  \item there exists a rule $\rbrule^{*} \not< \rbrule'$ such that
    \(\probability(\rbrule^{*}_{\rtraj'} > \rbrule^{*}_{\rtraj^{*}}) > 0\), i.e.,
    there is a compensating advantage elsewhere at a nonzero-probability set of environments.
  \end{itemize}
\end{proposition}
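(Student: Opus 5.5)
We argue by contradiction: assume the first two alternatives fail and derive the third. Concretely, suppose \(\probability(\rbrule'_{\rtraj'} < \rbrule'_{\rtraj^{*}}) > 0\) and \(\risk_{\rbrule'}(\rtraj^{*}) > \threshold_{\rbrule'}\). We aim to show that \(\rrrule'(\rtraj') < \rrrule'(\rtraj^{*})\) and then invoke Proposition~\ref{prop:optimal_risk}, which produces a rule \(\rbrule^{*} \not< \rbrule'\) with \(\probability(\rbrule^{*}_{\rtraj'} > \rbrule^{*}_{\rtraj^{*}}) > 0\). Proposition~\ref{prop:optimal_risk} requires only monotonicity, and Assumption~\ref{assumption:strict_monotonicity} implies Assumption~\ref{assumption:monotonicity}. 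Indeed, if \(f \le f'\) almost surely, then either \(\probability(f<f')>0\) and strict monotonicity gives \(\risk(f)<\risk(f')\), or \(f=f'\) almost surely. In the second case we need \(\risk(f)=\risk(f')\), i.e., that \(\risk\) depends only on the almost-sure class of its argument. This is standard for law-invariant measures, but it does not follow literally from Assumption~\ref{assumption:strict_monotonicity}, so it is a point to flag. I would state it as implicit in identifying random variables almost surely, as is customary.

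The difficulty is that a positive-probability advantage on \(\{\rbrule'_{\rtraj'} < \rbrule'_{\rtraj^{*}}\}\) does not give \(\rbrule'_{\rtraj'} \le \rbrule'_{\rtraj^{*}}\) almost surely, so strict monotonicity cannot be applied directly. I would therefore split into two cases.

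\emph{Case A: \(\probability(\rbrule'_{\rtraj'} > \rbrule'_{\rtraj^{*}}) > 0\).} Then the rule \(\rbrule^{*} = \rbrule'\) itself satisfies the third alternative, since \(\rbrule' \not< \rbrule'\) by irreflexivity of the strict order.

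\emph{Case B: \(\rbrule'_{\rtraj'} \le \rbrule'_{\rtraj^{*}}\) almost surely.} Combined with \(\probability(\rbrule'_{\rtraj'} < \rbrule'_{\rtraj^{*}})>0\), Assumption~\ref{assumption:strict_monotonicity} gives \(\risk_{\rbrule'}(\rtraj') < \risk_{\rbrule'}(\rtraj^{*})\). Since \(\risk_{\rbrule'}(\rtraj^{*}) > \threshold_{\rbrule'}\), we get \(\rrrule'(\rtraj^{*}) = \risk_{\rbrule'}(\rtraj^{*}) - \threshold_{\rbrule'} > \max\{\risk_{\rbrule'}(\rtraj')-\threshold_{\rbrule'},0\} = \rrrule'(\rtraj')\). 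A quick check of both branches of the max confirms this: the first branch holds by the strict risk inequality, and the second because the left side is positive. Proposition~\ref{prop:optimal_risk}, applied with the optimal \(\rtraj^{*}\), then yields the required \(\rbrule^{*}\).

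The main obstacle is not the algebra but noticing the case split. Case A makes the third alternative hold trivially, with \(\rbrule'\) compensating itself. The existence of the pointwise \(\outcome\) in the hypothesis plays no role beyond motivating the statement.
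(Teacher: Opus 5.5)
Your argument is the paper's proof essentially step for step: the same split on whether \(\probability(\rbrule'_{\rtraj'} > \rbrule'_{\rtraj^{*}})>0\) (then \(\rbrule^{*}=\rbrule'\) works), strict monotonicity in the almost-surely dominated case, the thresholding step giving \(\rrrule'(\rtraj') < \rrrule'(\rtraj^{*})\), and finally Proposition~\ref{prop:optimal_risk}. Your caveat is well founded and is glossed over in the paper, which invokes Proposition~\ref{prop:optimal_risk} under Assumption~\ref{assumption:strict_monotonicity} alone: the extra hypothesis (Assumption~\ref{assumption:monotonicity}, or almost-sure invariance of \(\risk\)) is genuinely needed, since on a probability space with a null atom one can build a real-valued \(\risk\) satisfying Assumption~\ref{assumption:strict_monotonicity} but not Assumption~\ref{assumption:monotonicity}, and with it a two-rule example in which all three alternatives fail.
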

\begin{proof}
  Let
  \(A = \{\outcome \in \outcomes \mid \rbrule'_{\rtraj'}(\outcome) < \rbrule'_{\rtraj^{*}}(\outcome)\}\).
  If $\probability(A) = 0$, then the first conclusion holds.
  Suppose now that $\probability(A) > 0$.
  If \(\probability(\rbrule'_{\rtraj'} > \rbrule'_{\rtraj^{*}}) > 0\),
  then the third conclusion holds with $\rbrule^{*} = \rbrule'$.
  Thus, it remains to consider the case
  \(\rbrule'_{\rtraj'} \le \rbrule'_{\rtraj^{*}}\) almost surely.
  Since $\probability(A) > 0$ and $\rbrule'_{\rtraj'} \le \rbrule'_{\rtraj^{*}}$ almost surely,
  the strict monotonicity of $\risk_{\rbrule'}$ implies
  \(\risk_{\rbrule'}(\rtraj') < \risk_{\rbrule'}(\rtraj^{*})\).
  If $\risk_{\rbrule'}(\rtraj^{*}) \le \threshold_{\rbrule'}$,
  then the second conclusion holds.
  Otherwise, $\risk_{\rbrule'}(\rtraj^{*}) > \threshold_{\rbrule'}$,
  so $\rrrule'(\rtraj^{*}) > 0$, which implies \(\rrrule'(\rtraj') < \rrrule'(\rtraj^{*})\).
  Since $\rtraj^{*}$ is optimal with respect to
  $\rarulebook$, it follows from Proposition \ref{prop:optimal_risk} that
  there exists a rule $\rbrule^{*} \not< \rbrule$ such that
  \(
  \probability(\rbrule^{*}_{\rtraj'} > \rbrule^{*}_{\rtraj^{*}}) > 0
  \).
\end{proof}

\begin{figure}
  \centering
  \resizebox{\columnwidth}{!}{%
    \begin{tikzpicture}[scale=0.8,>=Stealth]

  % --------------------------------------------------
  % Styles
  % --------------------------------------------------
  \tikzset{
    lane/.style={line width=1pt},
    dashedlane/.style={line width=0.8pt,dashed},
    av/.style={rectangle,fill=blue!30,draw,minimum width=0.8cm,minimum height=0.4cm},
    ped/.style={circle,fill=red!40,draw,minimum size=3mm},
    traj/.style={->,line width=1.2pt},
    dim/.style={<->,line width=0.9pt},
  }

  % --------------------------------------------------
  % Road and sidewalk (pedestrians on same side)
  % --------------------------------------------------
  % Road
  \draw[lane] (0,0) -- (10,0);
  \draw[lane] (0,2) -- (10,2);
  \draw[dashedlane] (0,1) -- (10,1);

  % Sidewalk (below AV lane)
  \draw[lane] (0,-0.6) -- (10,-0.6);
  \node at (9.5,-0.3) {\small sidewalk};

  % --------------------------------------------------
  % AV initial position
  % --------------------------------------------------
  \node[av] (av0) at (1,0.5) {};
  \node[below] at (av0.south) {\small AV};

  % --------------------------------------------------
  % Pedestrians (adjacent to AV lane)
  % --------------------------------------------------
  \node[ped] (p1) at (6.5,-0.3) {};
  \node[ped] (p2) at (7.0,-0.3) {};
  \node[ped] (p3) at (7.5,-0.3) {};
  \node[below] at (7.0,-0.6) {\small pedestrians};

  % --------------------------------------------------
  % 25 m distance (dimension line)
  % --------------------------------------------------
  % \draw[dim] (1,2.3) -- (7,2.3);
  % \node at (4,2.6) {\small 25 m};

  % --------------------------------------------------
  % AV trajectories (same lane, different speeds)
  % --------------------------------------------------
  % tau_1: maintain speed
  \draw[traj] (av0.east) -- ++(6.5,0)
  node[pos=0.92,right,above] {\small $\rtraj_{1}$};

  % tau_2: comfortable deceleration
  \draw[traj] (av0.east) -- ++(5.5,0)
  node[pos=0.92,right,above] {\small $\rtraj_{2}$};

  % tau_3: hard brake to stop
  \draw[traj] (av0.east) -- ++(4.5,0)
  node[pos=0.92,right,above] {\small $\rtraj_{3}$};

  % tau_4: lateral deviation into other lane
  \draw[traj]
  (av0.east)
  .. controls +(1.5,0) and +(-3.0,0) ..
  ++(6.5,0.7)
  node[pos=0.92,right,above] {\small $\rtraj_{4}$};

  % \draw[traj] (av0.east) to[out=0,in=180] ++(4.5,0.8)
  % node[midway,above] {\small $\rtraj_{4}$};

  % \draw[traj] (av0.east) -- ++(4.5,0.9)
  % node[midway,right] {\small $\rtraj_{4}$};

  % --------------------------------------------------
  % Pedestrian trajectories
  % --------------------------------------------------
  % zeta_1: stay on sidewalk
  \node at (8.0,-0.3) {\small $\etraj_{1}$};

  % zeta_2: enter road
  \draw[traj,dashed,red!70] (7,-0.1) -- (7,0.9)
  node[pos=0.35,above,right] {\small $\etraj_{2}$};

\end{tikzpicture}

%%% Local Variables:
%%% mode: LaTeX
%%% TeX-master: "../main"
%%% End:
  }
  \vspace{-6mm}
  \caption{AV–pedestrian interaction with their candidate trajectories.
    \vspace{-5mm}}
  \label{fig:ex1:setup}
\end{figure}

\section{Example}
\noindent\textbf{System:}
An autonomous vehicle (AV) is navigating a two-lane road, with a group of pedestrians walking along
the sidewalk adjacent to its lane, 25 meters ahead, as shown in Figure \ref{fig:ex1:setup}.
While the pedestrians are on the sidewalk, there is a small but non-zero probability of erratic
behavior caused by inattention (e.g., a child stepping onto the road without looking or a
distracted person on their phone) or an accident (e.g., a person tripping and falling
into the road).%
\footnote{This example is motivated by a real-world incident in Romania in which a Tesla vehicle
  swerved into the opposite lane to avoid a person who had fallen onto the road, resulting in a
  collision with an oncoming car.}
Additionally, some pedestrians intending to cross the road may unexpectedly do so even without a crosswalk.
The likelihood of such intentional jaywalking increases if the AV slows down, as this may be misinterpreted as
yielding the right of way.
% Such low-probability events, though unlikely, pose potential safety risks that the AV has to
% incorporate into its decision-making process.

To capture the diverse behaviors of pedestrians and their tendencies to enter the road,
a broad range of environment scenarios is needed to account for factors such as
the timing of pedestrian entering the road, their aggressiveness, etc.
To enable a detailed walkthrough and clearly demonstrate how the proposed
formalism captures various aspects of the problem, we focus on 4 distinct cases and define
$\outcomes = \{\outcome_{1}, \outcome_{2}, \outcome_{3}, \omega_{4}\}$.
Here, $\outcome_{1}$ represents a nominal scenario where all pedestrians behave rationally and never
enter the road.
$\outcome_{2}$ corresponds to an erratic scenario where a pedestrian enters the road regardless of
the AV's behavior.
$\outcome_{3}$ and $\outcome_{4}$ involve pedestrians who base their decision to cross on the AV's
deceleration: $\outcome_{3}$ corresponds to crossing if the AV decelerates to any degree,
while $\outcome_{4}$ requires the deceleration to exceed 3 m/s$^{2}$.

While the AV can respond to this situation in numerous ways, to illustrate the application of the
proposed formalism, let us focus on four distinct and contrasting AV's trajectories:
\begin{itemize}
\item $\rtraj_{1}$: maintain its current speed at 15 m/s and stays centered in its lane like most typical human drivers.
\item $\rtraj_{2}$: slow down with a comfortable deceleration of 1 m/s$^{2}$ to
  reduce the severity of a potential collision.
\item $\rtraj_{3}$: abruptly brake at 4.5 m/s$^2$ to come to a complete stop, fully eliminating collision risk.
\item $\rtraj_{4}$: maintain its current speed at 15 m/s while deviating 1 m from its lane to ensure
  sufficient lateral clearance from the pedestrians such that even if a pedestrian enters the
  road, a collision will be avoided. %, at the cost of violating the lane-keeping rule.
\end{itemize}

Similarly, consider two contrasting pedestrian's trajectories
$\etraj_{1}$ and $\etraj_{2}$, which corresponds to no pedestrian entering the road and a
pedestrian entering the road, respectively.

Let $\rtrajectories = \{\rtraj_{1}, \rtraj_{2}, \rtraj_{3}, \rtraj_{4}\}$,
$\etrajectories = \{\etraj_{1}, \etraj_{2}\}$, and
$\algebra = 2^{\outcomes}$.
The interaction between the AV and pedestrians is modeled by the function
$\interaction : \rtrajectories \times \outcomes \to \etrajectories$,
which determines the pedestrian trajectory
resulting from a given AV trajectory $\rtraj_{i} \in \rtrajectories$ and
an environment scenario $\outcome_{j} \in \outcomes$.
The function $\interaction$ and the probability measure $\probability$,
which specifies the likelihood of each scenario $\outcome_{j} \in \outcomes$,
are summarized in Figure \ref{fig:ex1:interaction}.

\noindent\textbf{Rulebook:}
We adopt the rulebook proposed in \cite{Helou:2021:Reasonable}.
For this situation, the relevant set of rules is
$\ruleset = \{\rbrule_{1}, \ldots, \rbrule_{4}\}$, where
\begin{itemize}
\item $\rbrule_{1}$: avoiding collisions with pedestrians, using the speed at collision squared as the measure of violation,
\item $\rbrule_{2}$: lane keeping, with the degree of violation measured by the amount of deviation from the lane,
\item $\rbrule_{3}$: helping flow of traffic, using the difference between the speed limit and the AV's
  minimum speed as the measure of violation, and
\item $\rbrule_{4}$: passenger comfort, with the difference between the actual deceleration and the
  comfortable deceleration squared as the measure of violation.
\end{itemize}
The preorder on $\ruleset$ is defined as
$\rbrule_{1} > \rbrule_{2} > \rbrule_{3}, \rbrule_{4}$, while
$\rbrule_{3}$ and $\rbrule_{4}$ are not comparable.
The value of $\rbrule_{k}(\rtraj_{i}, \etraj_{j})$ for each $i, k \in \{1, \ldots, 4\}$ and
$j \in \{1, 2\}$ is shown in Figure \ref{fig:ex1:rules}.

\begin{figure}[t]
  \centering
  \footnotesize
  \begin{tabular}{|c|c|c|c|c|}
    \hline
    Environment scenario& $\outcome_1$ & $\outcome_2$ & $\outcome_3$ & $\outcome_4$ \\ \hline
    $\probability$& 0.98 & 0.001 & 0.009 & 0.01 \\ \hline
    $\interaction_{\rtraj_1}$ & $\etraj_1$ & $\etraj_2$ & $\etraj_1$ & $\etraj_1$ \\ \hline
    $\interaction_{\rtraj_2}$ & $\etraj_1$ & $\etraj_2$ & $\etraj_2$ & $\etraj_1$ \\ \hline
    $\interaction_{\rtraj_3}$ & $\etraj_1$ & $\etraj_2$ & $\etraj_2$ & $\etraj_2$ \\ \hline
    $\interaction_{\rtraj_4}$ & $\etraj_1$ & $\etraj_2$ & $\etraj_1$ & $\etraj_1$ \\ \hline
  \end{tabular}
  \caption{The interaction $\interaction_{\rtraj_{i}}(\outcome_{j})$ and
    probability measure $\probability(\outcome_{j})$ for each $i, j \in \{1, \ldots, 4\}$.}
  \label{fig:ex1:interaction}
\end{figure}

\begin{figure}[tb]
  \centering
  \footnotesize
  \begin{tabular}{|c||c|c||c|c||c|c||c|c|}
    \hline
    &\multicolumn{2}{c||}{$\rbrule_{1}(\rtraj_i, \etraj_j)$}
    &\multicolumn{2}{c||}{$\rbrule_{2}(\rtraj_i, \etraj_j)$}
    &\multicolumn{2}{c||}{$\rbrule_{3}(\rtraj_i, \etraj_j)$}
    &\multicolumn{2}{c|}{$\rbrule_{4}(\rtraj_i, \etraj_j)$} \\ \hline
    & $\etraj_1$ & $\etraj_2$ & $\etraj_1$ & $\etraj_2$ & $\etraj_1$ & $\etraj_2$ & $\etraj_1$ & $\etraj_2$ \\ \hline
    $\rtraj_1$ & 0 & 225 & 0 & 0 & 0    & 0    & 0   & 0     \\ \hline
    $\rtraj_2$ & 0 & 175 & 0 & 0 & 1.77 & 1.77 & 0   & 0     \\ \hline
    $\rtraj_3$ & 0 &   0 & 0 & 0 & 15   & 15   & 12.25 & 12.25 \\ \hline
    $\rtraj_4$ & 0 &   0 & 1 & 1 & 0    & 0    & 0   & 0      \\ \hline
  \end{tabular}
  \caption{The degree of rule violations for each combination of AV and pedestrian trajectories.
    \vspace{-6mm}}
  \label{fig:ex1:rules}
\end{figure}

\begin{figure}[t]
  \begin{minipage}[t]{0.26\textwidth}
    \adjincludegraphics[width=1.0\textwidth,trim={{0.0\width} {0.0\height} {0.1\width}
      {0.11\height}},clip]{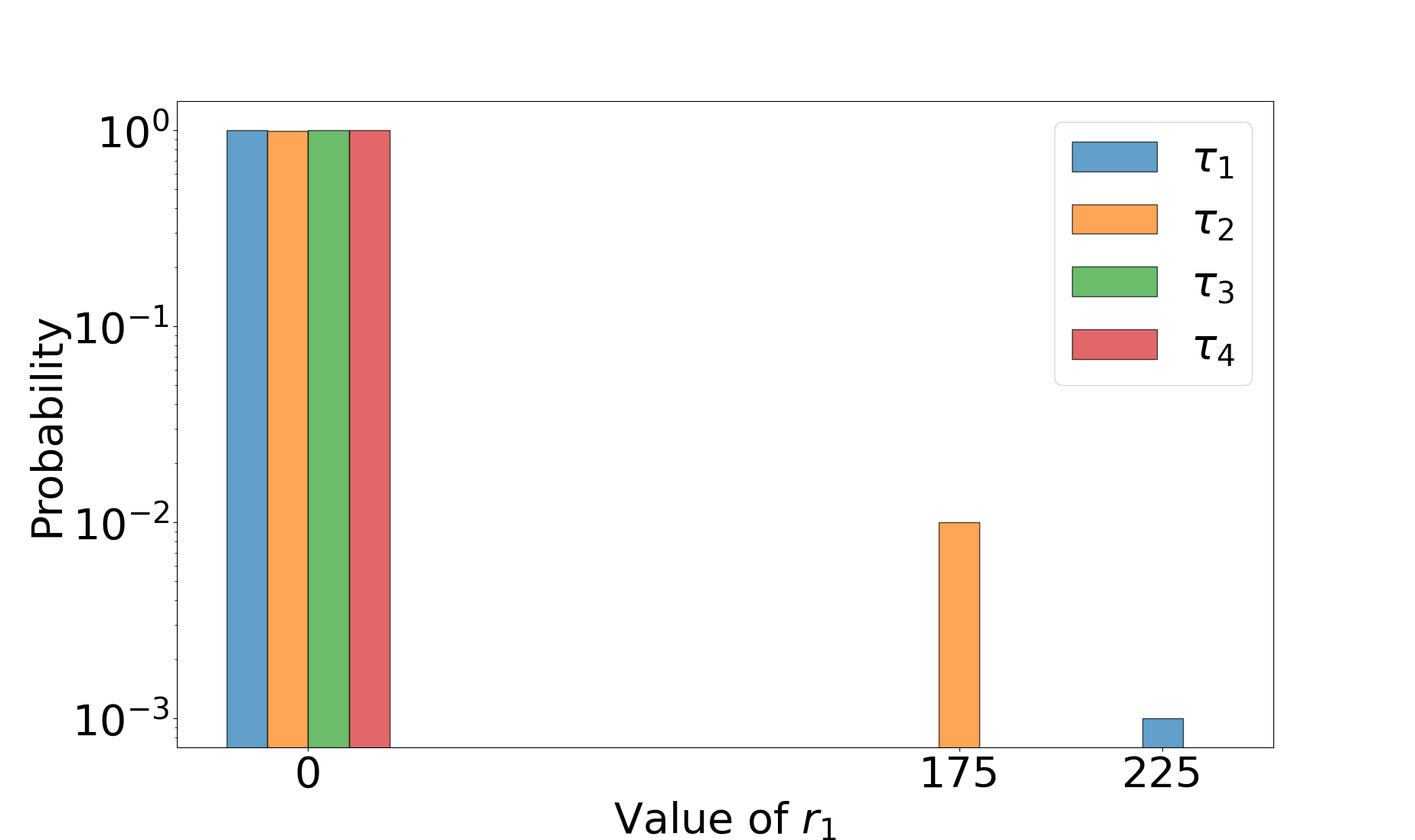}
  \end{minipage}
  \hfill
  \begin{minipage}[t]{0.2\textwidth}
    \vspace{-2.5cm}
    \footnotesize
    \begin{tabular}{|c|c|c|c|}
      \hline
      % &\multicolumn{4}{c|}{$\ruleset(\rtraj_i, \interaction_{\rtraj_{i}}(\outcome_j)$} \\ \hline
        & $\rbrule_{2}$ & $\rbrule_{3}$ & $\rbrule_{4}$ \\ \hline
      $\rtraj_{1}$ & 0 & 0 & 0 \\ \hline
      $\rtraj_{2}$ & 0 & 1.77 & 0 \\ \hline
      $\rtraj_{3}$ & 0 & 15 & 12.25 \\ \hline
      $\rtraj_{4}$ & 1 & 0 & 0 \\ \hline
    \end{tabular}
  \end{minipage}
  \vspace{-2mm}
  \caption{
    The degree of rule violations for different AV trajectories.
    [Left] Probability mass function of rule $\rbrule_{1}$ values for different AV
    trajectory $\rtraj_{i} \in \rtrajectories$ under environmental uncertainty.
    The x-axis represents the realized value $r_{1}(\rtraj_{i})$
    and the y-axis represents the probability of observing that value according to
    the probability measure $\probability$ over the environment scenarios $\outcomes$
    defined in Figure \ref{fig:ex1:interaction}.
    [Right] The violations for $\rbrule_{2}$, $\rbrule_{3}$, and $\rbrule_{4}$,
    where the degree of violation is independent of the environment scenarios. \vspace{-2mm}}
  \label{fig:ex1:violations}
\end{figure}

\begin{figure}[t]
  \footnotesize
  \begin{tabular}{|@{\hskip 1mm}c@{\hskip 1mm}|@{\hskip 1mm}c@{\hskip 1mm}|@{\hskip 1mm}c@{\hskip 1mm}|@{\hskip 1mm}c@{\hskip 1mm}|@{\hskip 1mm}c@{\hskip 1mm}|}
    \hline
    & Expected & Worst case & $\VaR_{\alpha}$ & $\CVaR_{\alpha}$ \\ \hline
    $\rtraj_1$ & 0.225 & 225 & $\begin{array}{ll}
                                  0 &\hspace{-3mm}\text{if } \alpha \leq 0.999\\
                                  225 &\hspace{-3mm}\text{otherwise}
                                \end{array}$
               & $\begin{array}{ll}
                    \frac{0.225}{1-\alpha} &\hspace{-3mm}\text{if } \alpha \leq 0.999\\
                    225 &\hspace{-3mm}\text{otherwise}
                  \end{array}$\\ \hline
    $\rtraj_{2}$ & 1.75 & 175 & $\begin{array}{ll}
                                   0 &\hspace{-3mm}\text{if } \alpha \leq 0.99\\
                                   175 &\hspace{-3mm}\text{otherwise}
                                 \end{array}$
               & $\begin{array}{ll}
                    \frac{1.75}{1-\alpha} &\hspace{-3mm}\text{if } \alpha \leq 0.99\\
                    175 &\hspace{-3mm}\text{otherwise}
                  \end{array}$ \\ \hline
    $\rtraj_{3}$ & 0 & 0 & 0 & 0\\ \hline
    $\rtraj_{4}$ & 0 & 0 & 0 & 0 \\ \hline
  \end{tabular}
  \caption{The risk of each AV trajectory with respect to rule $\rbrule_{1}$,
    evaluated using widely used risk measures.\vspace{-6mm}}
  \label{fig:ex1:risk}
\end{figure}

% \noindent\textbf{Results:}
Figure \ref{fig:ex1:violations} summarizes the probability distribution
of rule violation values $\rbrule_{k}(\rtraj_{i})$ for different AV trajectories
$\rtraj_{i} \in \rtrajectories$ and rules $\rbrule_{k} \in \ruleset$.
This distribution is obtained by combining the value of $\rbrule_{k}(\rtraj_{i}, \etraj_{j})$
from Figure \ref{fig:ex1:rules},
% which quantifies the degree of violations for each pair of AV and pedestrian trajectories,
with the probability distribution of pedestrian trajectories induced by the interaction
$\interaction_{\rtraj_{i}}$, where each $\etraj_{j}$ is
assigned probability $\probability(\{\outcome \in \outcomes \mid
\interaction_{\rtraj_{i}}(\outcome) = \etraj_{j}\})$,
according to the measure $\probability$ in Figure \ref{fig:ex1:interaction}.
% which provides the probability distribution of pedestrian trajectories corresponding to each AV trajectory.

\noindent\textbf{Analysis:}
For rule $\rbrule_{1}$, the choice of risk measure significantly affects the risk
$\rriski{1}(\rtraj_{i})$ of an AV trajectory $\rtraj_{i}$ as summarized in Figure \ref{fig:ex1:risk}.
As a result, the choice of risk measure $\rriski{1}$ and its corresponding risk threshold
$\rthresholdi{1}$ has a significant impact on the ranking of AV trajectories.
For example, if $\VaR_{\alpha}$, with $\alpha \leq 0.999$, is chosen as $\rriski{1}$ with
$\rthresholdi{1} = 0$, then $\rtraj_{1}$ would be the optimal trajectories as it does
not violate any other rules.
This preference can be justified by considering the risk of collision to be acceptable.
Note that this choice of risk measure and threshold is equivalent to imposing
a chance constraint \cite{Nemirovski:2007:Convex,Ono:2008:Efficient} that limits the probability of collision to at most 0.001.

On the other end of the spectrum, a more cautious AV may impose a stricter tolerance for collision.
For example, if $\VaR_{\alpha}$, with $\alpha > 0.999$ (i.e., allowing less than 0.001 probability
of collision), is chosen as $\rriski{1}$ with $\rthresholdi{1} = 0$, then
$\rtraj_{3}$ and $\rtraj_{4}$ would be preferred over $\rtraj_{1}$ and $\rtraj_{2}$ because
the risk of $\rtraj_{1}$ and $\rtraj_{2}$ with respect to the highest priority rule $\rbrule_{1}$
becomes unacceptable, while $\rtraj_{3}$ and $\rtraj_{4}$ only violate lower-priority rules.
The preference between $\rtraj_{3}$ and $\rtraj_{4}$ depends on the acceptable threshold
$\rthresholdi{2}$ for violations of the lane-keeping rule.
If $\rthresholdi{2} < 1$, then $\rtraj_{3}$ will be more preferable.
If $\rthresholdi{2} \geq 1$, then $\rtraj_{4}$ will be more preferable.

Certain combinations of $\rriski{1}$ and $\rthresholdi{1}$ can make $\rtraj_{2}$ the most
preferable option.
These combinations include using worst-case risk measure or
$\VaR_{\alpha}$ with $\alpha > 0.999$ and $175 \leq \rthresholdi{1} < 225$.
Another possibility is using $\CVaR_{\alpha}$ with $\alpha > \frac{174.775}{175}$.
Under these choices, $\rtraj_{2}$ is optimal.
Even though it is less preferable than $\rtraj_{1}$ with respect to $r_{3}$,
Proposition \ref{prop:optimal_risk} explains why this does not contradict optimality.
In particular, there is another rule $r_{1}$ that is not lower in priority than $r_{3}$
under which $\rtraj_{2}$ has an advantage over $\rtraj_{1}$ on a set of environment scenarios with
positive probability.
Specifically, under environment scenario $\omega_{2}$, we have
$r_{1}(\tau_{2}, \interaction_{\tau_{2}}(\omega_{2})) < r_{1}(\tau_{1},
\interaction_{\tau_{1}}(\omega_{2}))$ with $\probability(\omega_{2}) > 0$.
Thus, the disadvantage of $\rtraj_{2}$ under $r_{3}$ is compensated by an advantage under $r_{1}$,
which is at least as important.
This illustrates the tradeoff principle in Proposition~\ref{prop:optimal_risk} and justifies the
rationality of selecting $\rtraj_{2}$.

In summary, there is no universally best AV trajectory, as the most preferable one depends
on the chosen risk measures and the acceptable level of risk.
This aligns with findings in \cite{Helou:2021:Reasonable}, which uses preference data to show that
humans disagree on the best way to drive.
The proposed formalism not only provides an explanation for why a specific trajectory is selected,
but also offers a structured approach for effectively communicating these decisions to regulatory
bodies.

% \begin{example}
%   Consider the previous scenario with an extra complication.
%   Besides the pedestrians on its right, there is an oncoming vehicle is traveling in the opposite
%   lane, posing a direct risk if the AV veers too close to the centerline.
% \end{example}

\vspace{-1mm}
\section{Conclusions and Future Work}
We proposed a risk-aware rulebook formalism for evaluating system trajectories under environmental
uncertainty.
It models system-environment interactions and extends the original rulebook by
equipping each rule with a risk measure and threshold to assess the risk of individual system
trajectories.
We proved that this structure induces a preorder over system trajectories, ensuring
consistency and a well-defined notion of optimality, and showed that optimal trajectories satisfy
principled tradeoff properties across competing rules under uncertainty.
Our formulation generalizes various motion planning formulations,
including those involving hard and soft constraints
(by accommodating multiple levels of constraint hardness),
chance constraints, and risk-aware temporal logic specifications.
Together, our approach provides a foundation for explainable
decision-making in safety-critical autonomous systems.
In practice, we envision this formulation as being particularly useful in
systems where multiple planners (e.g., model-based and learning-based) generate a small set of
candidate trajectories that are then compared using the risk-aware rulebook under uncertainty.
More broadly, this work opens up new research directions in the
development of planning, control, and verification algorithms compatible with risk-aware rulebooks.

\section{Acknowledgments}
The author gratefully acknowledges Konstantin Slutsky for valuable discussions and guidance that
strengthened the mathematical rigor of this work, especially regarding the measure-theoretic and
Borel-space aspects.

\bibliographystyle{ieeetr}
\bibliography{ref}

\end{document}